\newtheorem{thm}{Theorem}
\newtheorem{cor}[thm]{Corollary}
\newcommand{\mt}{\tilde{\mu}}%-----------mu-tilde
\newcommand{\com}[2]{\langle #1|#2\rangle}%-----------commands
\newcommand{\ov}[1]{\overline{#1}}
\newcommand{\lb}{\lambda}
\newcommand{\ehole}{[\,]}%--------------------empty hole
\newcommand{\Eta}{\mathsf{DNeg}}
\newcommand{\up}[1]{{\uparrow}{#1}}
\newcommand{\bindraw}{\mathsf{bind}}
\newcommand{\bind}[3]{\bindraw(#1,#2.#3)}
\newcommand{\letraw}{\mathsf{let}}
\newcommand{\letvc}[3]{\letraw(#1,#2.#3)} %--------- let for VC-claculus
\newcommand{\ret}{\mathsf{ret}\,}
\def\pcomm(#1|#2){\langle{#1}|{#2}\rangle} % introduced  by R.M.
\def\mutil{{\tilde{\mu}}}
\def\slet{{\sf let}}
\def\sin{{\sf in}}
\def\llet<#1=#2,#3>{{\slet\ {#1}={#2}\ \sin\ {#3}}}
\newcommand{\esubst}[3]{\{#1/#2\}#3}
\newcommand{\subst}[3]{\esubst {#1}{#2}{#3}}
\newcommand{\ol}{\overline{\lambda} }%----------lambda-bar
\newcommand{\lmmt}{\ol\mu\tilde{\mu}}%-----------lambda mu mu-tilde
\newcommand{\lm}{\lambda\mu}%-----------lambda mu
\newcommand{\mlm}{\lm_\mathsf{M}}%---------------monadic lambda mu new style
\newcommand{\lvc}{{\mathsf{VC}}\mu_{\mathsf{M}}}%--------calculus of values and computations
\def\vlmm{{\ov{\lambda}\mu\tilde{\mu}^{\sf v}}}
\def\l2{{\lambda^{\sf 2L}}}
\def\nlmm{{\ov{\lambda}\mu\tilde{\mu}^{\sf n}}}
\newcommand{\lmmtvn}{\lmmt_{\mathsf{vn}}}%------- a new try by jes
\newcommand{\mon}{M}
\newcommand{\imp}{\supset}%-----implication, arrow type
\newcommand{\seql}[4]{#1\mid #2:#3\vdash #4}%----sequent for lists/co-terms
\newcommand{\cps}[1]{\langle\![{#1}]\!\rangle}
\newcommand{\cmmin}[1]{{#1}^{\bullet-}}
\def\vmo#1{{\ov{{#1}}}}
\def\cm#1{{{#1}^\bullet}}
\def\cmaux#1{{{#1}^\star}}
\newcommand{\mtr}[1]{\ov{#1}} %------------monadic translation
\newcommand{\mdtr}[1]{{#1}^{\dagger}} %------------ variant for types that replaces A_* notation
\newcommand{\mmdtr}[1]{\ov{#1}^{\dagger}}
\newcommand{\sigmabi}{{\sigma}}   % --- new by R.M.
\newcommand{\pib}{\pi_\bindraw}
\newcommand{\pil}{\pi_\letraw}
\newcommand{\pic}{\pi_{\mathsf{covar}}}
\newcommand{\etab}{{\eta_\bindraw}}
\newcommand{\etal}{{\eta_\letraw}}
\newcommand{\etam}{{\eta_\mu}}
\newcommand{\etamt}{{\eta_{\mt}}}
\newcommand{\etamtx}{\eta_{\mutil,x}}
\def\red{\to}
\def\vbeta{{\beta^{\sf v}}}
\newcommand{\sigmav}{\sigma^{\sf v}}
\newcommand{\pin}{\pi^{\sf n}}
\newcommand{\betae}{{\beta_e}}%  --- eager beta
\begin{document}

\title{Confluence for classical logic through the distinction between values
and computations}

\author
{Jos\'e Esp\'\i{}rito Santo
\institute{Centro de Matem\'{a}tica, Universidade do Minho, Portugal
\email{jes@math.uminho.pt}}
\and
Ralph Matthes
\institute{I.R.I.T. (C.N.R.S. and University of Toulouse III), France
\email{matthes@irit.fr}}
\and
Koji Nakazawa
\institute{Graduate School of Informatics, Kyoto University, Japan
\email{knak@kuis.kyoto-u.ac.jp}}
\and
Lu\'\i{}s Pinto
\institute{Centro de Matem\'{a}tica, Universidade do Minho, Portugal
\email{luis@math.uminho.pt}}
}

\def\titlerunning{Confluence for classical logic through the distinction
between values and computation}
\def\authorrunning{J.~Esp\'\i{}rito~Santo, R.~Matthes, K.~Nakazawa, and L.~Pinto}

\maketitle

\begin{abstract}
We apply an idea originated in the theory of programming languages---monadic
meta-language with a distinction between values and
computations---in the design of a calculus of cut-elimination for
classical logic.  The cut-elimination calculus we obtain comprehends
the call-by-name and call-by-value fragments of Curien-Herbelin's
$\lmmt$-calculus without losing confluence, and is based on a
distinction of ``modes'' in the proof expressions and ``mode''
annotations in types. Modes resemble colors and polarities, but are
quite different: we give meaning to them in terms of a monadic
meta-language where the distinction between values and computations
is fully explored. This meta-language is a refinement of the
classical monadic language previously introduced by the authors, and
is also developed in the paper.
\end{abstract}

% \textbf{Keywords:}\\
% call-by-name and call-by-value\\
% strong normalization\\
% continuation-passing style translation\\
% monadic meta-language\\
% polymorphic lambda-calculus

\section{Introduction}\label{sec:intro}

It is well-known that confluence fails for cut elimination in
classical logic in the worst way: proof identity is trivialized
\cite{GirardLafontTaylor89}. Computationally, this trivialization is
caused by the ``superimposition'' of call-by-name (cbn) and
call-by-value (cbv) in the proof reduction of classical sequent
calculus \cite{CurienHerbelinICFP2000,HerbelinHabilitationThesis}.

Several solutions have been proposed to this problem
(e.g., \cite{GirardMSCS91,DanosJoinetSchellinxJSL97,CurienHerbelinICFP2000,ZeilbergerAPAL08,Munch-MaccagnoniCSL09,Curien-Munch-Maccagnoni-ifipTCS10}).
Some solutions consist in constraining the set of derivations
(e.g., the sequent calculi $LC$ \cite{GirardMSCS91} or $LKT$, $LKQ$
\cite{CurienHerbelinICFP2000}), others constrain the reduction rules
(e.g., the cbn and cbv fragments of $\lmmt$
\cite{CurienHerbelinICFP2000}). A third kind relies on the
enrichment/refinement of the syntax of formulas, by means of colors
or polarities
\cite{DanosJoinetSchellinxJSL97,ZeilbergerAPAL08,Munch-MaccagnoniCSL09,Curien-Munch-Maccagnoni-ifipTCS10}.
We propose a new solution of the latter kind: a confluent variant of
the system $\lmmt$ \cite{CurienHerbelinICFP2000} that comprehends
the cbn and cbv fragments of $\lmmt$, and that is based on a
distinction of two ``modes'' in the proof expressions and ``mode''
annotations in function spaces. Our system is called \emph{$\lmmt$
with modes} and denoted $\lmmtvn$.

In a self-contained explanation of $\lmmtvn$, one starts by
splitting the set of variables in proof expressions into two
disjoint sets, by singling out a set of ``value variables''. This
allows a refinement of the notions of value and co-value which
immediately solves the cbn/cbv dilemma. However, $\lmmtvn$ has many
design decisions that may look peculiar at first sight. For
instance, atomic formulas do not get a mode annotation, while
composite formulas do; and while variables in proof expressions
get a mode, co-variables do not.

A full semantics for (the design of) $\lmmtvn$  is given in terms of
a monadic meta-language of the kind introduced by Moggi
\cite{Moggi91}. This meta-language, called the \emph{calculus of
values and computations} and denoted $\lvc$, is also developed in
the present paper. It is a refinement of the monadic language
previously introduced by the authors \cite{ourMSCS}, and as such
combines classical logic with a monad. The refinement is guided by
the idea of extending in a coherent way to proof expressions the
distinction between value types and computation types (so that, for
instance, a typable expression is a value iff it is typable with a
value type). In such a system we can interpret the distinction
between the two modes of $\lmmtvn$ in terms of the distinction
value/computation.

Confluence for typed expressions of $\lmmtvn$ is obtained (through
Newman's lemma) from strong normalization. The latter, in turn, is
obtained by proving that two translations produce strict simulation
by strongly normalizing targets: one is the map from $\lvc$ to the
simply-typed $\lb$-calculus induced by instantiating the monad of
$\lvc$ to the continuations monad; the other is the monadic
semantics from $\lmmtvn$ to $\lvc$. As a side remark, we observe
that the composition of the two translations produces a CPS
translation of $\lmmtvn$ which is therefore uniform for cbn and cbv,
since the cbn and cbv fragments of $\lmmt$ are included in
$\lmmtvn$.

\paragraph{Structure of the paper.}
Section~\ref{sec:monadic-langs} recalls the monadic meta-language $\mlm$
\cite{ourMSCS} and develops the calculus of values and
computations $\lvc$. Section \ref{sec:classical-logic} recalls
$\lmmt$ and its main critical pair (the cbn/cbv dilemma), and develops
the proposed variant of $\lmmt$ with modes. Section~\ref{sec:concl}
concludes, and discusses related and future work. See Fig.~\ref{fig:new-picture} for an overview.

%-----------------------------------------------
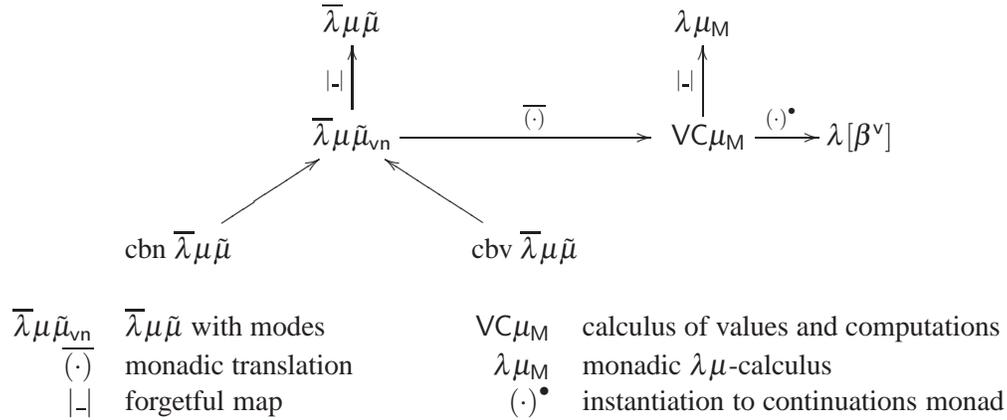
\begin{figure}[t]\caption
{Overview}\label{fig:new-picture}
\begin{center}
\begin{tabular}{c}
\xymatrix{%
&\lmmt&&\mlm&\\
%&&&&\\
&\lmmtvn\ar[u]^{|\_|}\ar[rr]^{\;\;\;\vmo{(\cdot)}}&&\;\;\lvc\ar[u]^{|\_|}\ar[r]^{{(\cdot)}^{\bullet}}&{\lb[\vbeta]}\\
%&&&&\\
\textrm{cbn }\lmmt\ar[ru]&&\textrm{cbv }\lmmt\ar[lu]&&\\
}%
\\ \\
\begin{tabular}[t]{rlcrl}
$\lmmtvn$&$\lmmt$ with modes&\qquad\qquad&$\lvc$&calculus of
values and computations\\
$\vmo{(\cdot)}$&monadic translation&\qquad\qquad&$\mlm$&monadic $\lm$-calculus\\
$|\_|$&forgetful map&&$\cm{(\cdot)}$&instantiation to continuations monad\\
%&&&$|\_|$&forgetful map
\end{tabular}
\end{tabular}
\end{center}
\end{figure}
%-----------------------------------------------

%%% Local Variables:
%%% mode: latex
%%% TeX-master: "main"
%%% End:

\section{Monadic meta-languages}\label{sec:monadic-langs}

We start this section by recalling the $\mlm$-calculus. Next we
motivate and formally develop, as a sub-calculus of $\mlm$, a
calculus of values and computations, denoted $\lvc$. We continue
with a comparison between the two monadic languages, and spell out
the intuitionistic fragment of $\lvc$. Finally, we study the map
from $\lvc$ into the simply-typed $\lb$-calculus obtained by
instantiating the monad of $\lvc$ to the continuations monad.

%------------------------------------------------
\subsection{The $\mlm$-calculus}

We recapitulate the $\mlm$-calculus that has been proposed by the
present authors \cite{ourMSCS}.

Expressions $T$ are values,
terms, and commands that are defined by the following
grammar\footnote{In the notation of our previous paper
\cite{ourMSCS} $\ret t$ is written $\eta t$, and in the notation of
Moggi \cite{Moggi91}, $\bind txc$ and $\ret t$ are written
\mbox{${\sf let}\, x=t\, {\sf
in}\, c$} and $[t]$, respectively.}:\\[-2ex]
\[
 V  ::= x\mid\lambda x.t
\qquad r,s,t,u ::= V\mid tu\mid\mu a.c\mid\ret t
\qquad c ::= at\mid\bind{t}{x}{c}.
\]
Variable occurrences of $x$ in $t$ of $\lambda x.t$ and $c$ of
$\bind{t}{x}{c}$, and $a$ in $c$ of $\mu a.c$ are bound.
We introduce base contexts $L$ as commands with a ``hole'' for a term of
the following two forms: $a\ehole$ and $\bind{\ehole}{x}{c}$.  For a
term $t$, $L[t]$ is defined as ``hole-filling''.
Term substitution $[s/x]T$ is defined in the obvious way as for
$\lmmt$.
Furthermore, structural substitutions $[L/a]T$ are defined by
recursively replacing every subexpression $au$ of $T$, by $L[u]$ (this
may need renaming of bound variables).
It corresponds to the substitution of co-variables in $\lmmt$.

Types are given by $A,B\,::=\,X\mid A\supset B\mid\mon A$. Thus,
besides the type variables and implication of $\lmmt$, we have a
unary operation $M$ on types. Types of the form $\mon A$ are called
\emph{monadic types}.
Sequents are written as:
$\Gamma\vdash t:A\mid \Delta$ and $c:(\Gamma\vdash\Delta)$. In both
cases, $\Delta$ is a consistent set of declarations $a:\mon A$,
hence with monadic types.
Typing rules and reduction rules are given in
Fig.~\ref{fig:monadic-lambda-munew}.
%--------------------------------------------------------------------------
\begin{figure}[tb]
\caption{Typing rules and reduction rules of $\mlm$}
\label{fig:monadic-lambda-munew}
$$
\infer[Ax]{\Gamma,x:A\vdash x:A\mid \Delta}{}\qquad
\infer[\mathit{Intro}]{\Gamma\vdash\lambda x.t:A\supset
B\mid \Delta}{\Gamma,x:A\vdash t:B\mid \Delta}\qquad \infer[\mathit{Elim}]{\Gamma\vdash
tu:B\mid \Delta}{\Gamma\vdash t:A\supset
 B\mid \Delta&\Gamma\vdash u:A\mid \Delta}
$$
$$
\infer[Pass]{at:(\Gamma\vdash a:\mon A,\Delta)}{\Gamma\vdash t:\mon
A\mid a:\mon A,\Delta}\qquad \infer[Act]{\Gamma\vdash \mu a.c:\mon
A\mid \Delta}{c:(\Gamma\vdash a:\mon A,\Delta)}
$$
$$
\infer[]{\Gamma\vdash \ret s:\mon A\mid \Delta}{\Gamma\vdash s:A\mid \Delta}
\qquad
 \infer[]{\bind rxc:(\Gamma\vdash\Delta)}{\Gamma\vdash r:\mon
A\mid \Delta&c:(\Gamma,x:A\vdash\Delta)}
$$
$\mathstrut$
$$
\begin{array}{r@{\quad}rcl@{\qquad\qquad}r@{\quad}rcl}
(\beta)  & (\lambda x.t)s& \rightarrow& [s/x]t
& (\etam)  & \mu a.at &\rightarrow & t \quad(a\notin t)\\
(\sigmabi) & \bind{\ret s}xc&\rightarrow&[s/x]c
& (\etab) & \bind{t}{x}{a(\ret x)}&\rightarrow&at\\
(\pi)  & L[\mu a.c]&\rightarrow&[L/a]c\\
\end{array}
$$
\end{figure}
Notice that the rule $\pi$ uses the derived syntactic class of base
contexts and is therefore a scheme that stands for the following two
rules\\[-2ex]
\[
\begin{array}{rcrcl}
(\pib) & \quad & \bind{\mu a.c}{x}{c'}&\rightarrow & [\bind{\ehole}{x}{c'}/a]c\\
(\pic) & \quad & b(\mu a.c) &\rightarrow & [b/a]c.
\end{array}
\]
It is easy to see that $\mlm$ satisfies subject reduction, for
strong normalization see our previous paper \cite{ourMSCS}.

%-----------------------------------------
\subsection{Towards a calculus of values and computations}

We identify a sub-language of $\mlm$, called $\lvc$, the calculus of
values and computations. The calculus can be motivated as a sharp
implementation of the principles at the basis of Moggi's semantics
of programming languages \cite{Moggi91}.

According to Moggi, each programming language type $\mathtt{A}$
gives rise to the types $A$ (of ``values of type $\mathtt{A}$'') and
$MA$ (of ``computations of type $\mathtt{A}$''). In addition, a
program of type $\mathtt{A}\to\mathtt{A'}$ corresponds to an
expression of type $A\imp MA'$. In this rationale: (i) attention is
paid only to a part of the function space, and (ii) there is no role
to types $M(MA)$, written $M^2A$ in the sequel. $\lvc$ implements
(i), extracting the full consequences at the level of expressions;
(ii) is already realized in any monadic language, since each
expression of type $M^2A$ may be coerced to one of type $MA$ by
monad multiplication, but $\lvc$ goes farther by removing $M^2A$
from the syntax of types.

$\lvc$ is thus obtained from $\mlm$ after three steps of
simplification as follows.

Firstly, we restrict implications to the form $A\supset MA'$. This is
already done, for instance, in the presentation of the monadic
meta-language by Hatcliff and Danvy \cite{HatcliffDanvy94}; however, we do the
restriction in a formal way, by separating a class of types
$C::=MA$. If $B$ denotes a non-monadic type, then types are given by
$A::=\,B\mid C$, with $B::=\,X\mid A\imp C$. Following \emph{op.\ cit.}, we call types $B$ (resp.~$C$) ``value types'' (resp.~``computation types'').

Secondly, we pay attention to expressions. We now have two meanings
for the word ``value'': either as a term with value type, or the
``traditional'' one of being a variable or $\lambda$-abstraction. So
far, a term has value type only if it is a ``traditional value''. On
the other hand, the separation into value and computation types
splits the term-formers into $\lb$-abstraction (with value type) and
$\ret t$, $tu$ and $\mu a.c$ (with computation type). The full split
of terms into two categories, \emph{values} $V$ and
\emph{computations} $P$, is obtained by separating two sets of term
variables, \emph{value variables} $v$ and \emph{computation
variables} $p$, with the intention of having a \emph{well-moded}
typing system, that is, one that assigns to the variables types with
the right ``mode'' (value or computation). This achieves coherence
for the two meanings of ``value'': a term has a value type iff it is
a traditional value (that is, a \emph{value} variable or
$\lambda$-abstraction). It follows that a term has a computation
type iff it is a computation. At this point, we are sure not to lose
any typable terms, if we restrict $tu$ and $\lambda x.t$ to $Vu$ and
$\lb x.P$, respectively.

Thirdly, we restrict computation types (hence the type of
co-variables) to $MB$, thus forbidding $M^2A$, and forcing $\ret V$
instead of $\ret t$.

The formal presentation of $\lvc$ follows.

%-----------------------------------------
\subsection{The calculus $\lvc$}

\textbf{Expressions.}
The variables of $\mlm$ are divided into two disjoint name spaces, and
denoted by $x$ if any of them is meant. Co-variables are ranged over
by $a$, $b$, as for $\lmmt$ and $\mlm$. Expressions are given by the
grammar in Fig.~\ref{fig:lvc}.
%--------------------------------------------------------------
\begin{figure}[tb]\caption{Expressions and types of $\lvc$}\label{fig:lvc}
\[
\begin{array}{rrclrrcl}
(\textrm{value vars}) & v,w,f&&&(\textrm{value types}) & B & ::= & X\mid A \imp C\\
(\textrm{comp. vars}) & n,p,q&&&(\textrm{comp.~types}) & C & ::= & MB\\
(\textrm{variables})& x,y,z&::=& v\mid p&(\textrm{types}) & A  & ::= & B\mid C\\
(\textrm{values}) & V,W & ::= & v\mid\lambda x.P\\
(\textrm{computations}) & P,Q & ::= & p\mid\ret V\mid Vu\mid\mu a.c\\
(\textrm{terms}) & t,u & ::= & V\mid P\\
(\textrm{commands}) & c & ::= & aP\mid\letvc{P}{v}{c}\mid\subst Ppc
\end{array}
\]
\end{figure}
%--------------------------------------------------------------

\textbf{Types.} The motivation for these syntactic distinctions
comes from the types that should be assigned. The type system of
$\mlm$ is also restricted and divided into two classes, see again
Fig.~\ref{fig:lvc}.
In particular, as explained before, there is no type of the form $M(MA)$ in  $\lvc$.

The idea of the distinction into values and computations is that
values receive value types and computations receive computation
types in a context where value variables are assigned value types
and computation variables are assigned computation types. Such
contexts will be called \emph{well-moded}. It is remarkable that the
distinction can be done on the level of raw syntax (and that it will
be preserved under the reduction rules to be presented below). The
new syntax element $\subst Ppc$ represents $\bind{\ret P}pc$ in
$\mlm$. This means that no argument $t$ to $\bind tpc$ other than of
the form $\ret P$ is considered, but this is not seen as composed of
a $\bindraw$ and a $\ret$ operation but atomic in $\lvc$. The expression
$\ret P$ does not even belong to $\lvc$.  See Section
\ref{subseq:bind-let-subst} for more on the connection with $\mlm$.

\textbf{Typing rules} are inherited from $\mlm$, with their full
presentation in Fig.~\ref{fig:typing-rules-CVC}.
Here, every context $\Gamma$ in the judgements is
\emph{well-moded} in the sense given above. As for $\mlm$, the
contexts $\Delta$
consist only of bindings of the
form $a:\mon A$, which, for $\lvc$ even requires $a:\mon B$, hence
$a:C$. Thus, more precisely, co-variables might be called
``computation co-variables''.

%--------------------------------------------------------------------------
\begin{figure}[tb]\caption{Typing rules of $\lvc$}\label{fig:typing-rules-CVC}
$$
\begin{array}{c}
\infer[Axv]{\Gamma,v:B\vdash v:B\mid \Delta}{}\qquad
\infer[Axc]{\Gamma,p:C\vdash p:C\mid \Delta}{}\\[1ex]
\infer[\mathit{Intro}]{\Gamma\vdash\lambda x.P:A\supset
C\mid \Delta}{\Gamma,x:A\vdash P:C\mid \Delta}\qquad
\infer[\mathit{Elim}]{\Gamma\vdash Vu:C\mid \Delta}{\Gamma\vdash V:A\supset C\mid \Delta&\Gamma\vdash u:A\mid \Delta}\\[1ex]
\infer[Pass]{aP:(\Gamma\vdash a:C,\Delta)}{\Gamma\vdash
P:C\mid a:C,\Delta}\qquad\infer[Act]{\Gamma\vdash \mu
a.c:C\mid \Delta}{c:(\Gamma\vdash
a:C,\Delta)}\\[1ex]
\infer{\Gamma\vdash \ret V:\mon B\mid \Delta}{\Gamma\vdash V:B\mid \Delta}
\qquad \infer{\letvc Pvc:(\Gamma\vdash\Delta)}{\Gamma\vdash
P:\mon B\mid \Delta&c:(\Gamma,v:B\vdash\Delta)}\\[1ex]
\infer{\subst Ppc:(\Gamma\vdash\Delta)}{\Gamma\vdash
P:C\mid \Delta&c:(\Gamma,p:C\vdash\Delta)}
\end{array}
$$
\end{figure}
%--------------------------------------------------------------------------

Clearly, the above-mentioned intuition can be made precise in that
$\Gamma\vdash P:A\mid \Delta$ implies that $A$ is a computation type and
that $\Gamma\vdash V:A\mid \Delta$ implies that $A$ is a value type.
This can be read off immediately from
Fig.~\ref{fig:typing-rules-CVC}.

\emph{Well-moded} substitutions $[u/x]t$ and $[u/x]c$, i.\,e., with
$x$ and $u$ either value variable and value or computation variable
and computation, are inherited from $\mlm$. Well-moded substitution
$[u/x]t$ respects modes in that $[u/x]V$ is a value and $[u/x]P$ is
a computation. Likewise, $[u/x]c$ is a command. As in $\mlm$, we use
derived syntactic classes of contexts as follows:
$$
\begin{array}{rcrclcrcrcl}
(\textrm{base contexts}) & \quad & L & ::= &
a\ehole\mid\letvc{\ehole}{v}{c}&\qquad&
(\textrm{cbn contexts}) & \quad & N & ::= & L\mid\subst{\ehole}{p}{c}.
\end{array}
$$
The result $N[P]$ of filling the hole of $N$ by a computation $P$ is
inherited from $\mlm$, and also the notion of structural
substitution $[N/a]t$, $[N/a]c$ and $[C/a]N'$ and, finally, the
definition of well-moded substitution $[u/x]N$ in cbn contexts that
yields cbn contexts.

\textbf{Reduction rules} of $\lvc$ are given in
Fig.~\ref{fig:CVCrules}, where co-variable $b$ is assumed to be
fresh in both $\beta$ rules. The first thing to check is that the
left-hand sides of the rules are well-formed expressions of $\lvc$
and that the respective right-hand sides belong to the same
syntactic categories. The second step consists in verifying subject
reduction: this is immediate for the rules other than $\beta$ since
they are just restrictions of reduction rules of $\mlm$, and it is
fairly easy to see that the right-hand sides of the $\beta$ rules
receive the same type as $P$.

%------------------------------------------------------
\begin{figure}[tb]\caption{Reduction rules of $\lvc$}\label{fig:CVCrules}
$$
\begin{array}{rcrcl}
(\beta) &\quad& (\lambda v.P)V& \rightarrow& \mu b.\letvc{\ret
V}v{bP}\\&&(\lambda q.P)Q& \rightarrow& \mu b.\subst Qq{(bP)}\\
(\sigmabi)& \quad & \letvc{\ret V}vc&\rightarrow&[V/v]c\\&&
\subst Ppc&\rightarrow&[P/p]c\\
(\pi) & \quad & L[\mu a.c]&\rightarrow & [L/a]c\\
(\etam) & \quad & \mu a.aP &\rightarrow & P \quad(a\notin P)\\
(\eta_\letraw)&\quad&\letvc{P}{v}{a(\ret v)}&\rightarrow&aP
\end{array}
$$
\end{figure}
%------------------------------------------------------

The rules $\beta$ and $\sigmabi$ are analogous to the respective
rules of $\lmmt$ (see further down in Section~\ref{subsec:lmmt}), where any execution of term substitution in the
reduction rules is delegated to an application of rule $\sigma$ and
where therefore the $\beta$-reduction rule of $\lmmt$ has a
right-hand side that is never a normal term.
They are ``lazy'' since they delay term
substitution, but, by putting together $\beta$, $\sigmabi$ and
$\etam$, we obtain the following derived \emph{eager} $\beta$ rules:
$$
(\betae)\qquad\qquad(\lambda v.P)V \rightarrow [V/v]P\qquad\qquad
(\lambda q.P)Q \rightarrow [Q/q]P.
$$
For the first rule, the derivation is
$$
(\lambda v.P)V\to_\beta \mu b.\letvc{\ret V}v{bP}\to_\sigmabi\mu
b.[V/v](bP) = \mu b.b([V/v]P)\to_\etam[V/v]P.
$$
For the second rule, it is analogous.
According to the form of $L$, the $\pi$-rule splits again into
$\pic$ and
$$
\begin{array}{rcrcl}
(\pil) & \quad & \letvc{\mu a.c}{v}{c'}&\rightarrow & [\letvc{\ehole}{v}{c'}/a]c.
\end{array}
$$

The calculus $\lvc$ is confluent. There are five critical pairs, each of
them corresponding to a critical pair of $\mlm$. A confluence proof can
be given using an abstract rewriting theorem
\cite{Dehornoy-vanOostrom2008} for $\beta$, $\sigma$ and $\pi$ and
strong commutation with the $\eta$-rules.

%-----------------------------------------
\subsection{Bind, let, and
substitution}\label{subseq:bind-let-subst}

In this section we
formally relate $\lvc$ with $\mlm$,
explaining the decompositions and refinements that the former brings
relatively to the latter.

As said, $\lvc$ is obtained from $\mlm$ by a three-fold restriction.
In the first step, function spaces are restricted to the
form $A\imp MA$. This already brings a novelty: the $\beta$ rule of
$\mlm$ can be decomposed into a new, finer-grained $\beta$-rule
\begin{equation}\label{eq:euler-beta}
(\lb x.t)u\to\mu a.\bind{\ret u}x{at}
\end{equation}
plus $\sigmabi$, $\etam$. Notice that in $\mlm$
(\ref{eq:euler-beta}) would break subject reduction, because $t$
would not be forced to have a monadic type.\footnote{One can marvel
how in (\ref{eq:euler-beta}) the constructors related to the type
$\imp$ in the l.h.s. of the rule are converted into an expression in
the r.h.s. using all of the constructors related to classical logic
and the monad.}

Let us call \emph{restricted} $\mlm$ this variant of $\mlm$, with
the restriction on function spaces and the variant
(\ref{eq:euler-beta}) of the $\beta$-rule. Then, $\lvc$ is clearly a
subsystem of restricted $\mlm$. Formally there is a forgetful map
$|\cdot|$ from the former to the latter that: (i) at the level of
types, forgets the distinction between value types and computation
types; (ii) at the level of expressions, forgets the distinction
between values and computations, and blurs the distinction between
$\letraw$ and substitution:
$$
|\letvc Pvc|=\bind {|P|}v{|c|}\qquad\qquad |\subst Ppc|=\bind{\ret
|P|}p{|c|}.
$$

What is the difference between $\letraw$ and substitution? This is
perhaps clearer in the intuitionistic subsystem of $\lvc$, which we
now spell out.

We follow the same steps as in our previous paper \cite{ourMSCS}, where the
intuitionistic subsystem of $\mlm$ (essentially Moggi's monadic
meta-language \cite{Moggi91}) was obtained. First we adopt a single
co-variable $*$ , say, which is never free in values or computations,
and which has a single free occurrence in commands. The
constructions $\mu *.c$ and $*P$ are like coercions between the
syntactic classes of computations and commands, coercions which in the
next step we decide not to write, causing the mutual inclusion of the
two classes, and the collapse of $\pic$ and of one of the cases of
$\pil$. The final step is to merge the two syntactic classes into a
single class of computations.

The resulting intuitionistic subsystem of $\lvc$ has the following
syntax:
$$
V,W \, ::= \, v\mid\lb x.P\qquad\qquad P,Q \, ::= \, p \mid \ret V
\mid Vu \mid \letvc PvQ \mid\subst PpQ.
$$
Again, $t,u\,::=\,V\mid P$ and $x,y\,::=\,v\mid p$. The reduction
rules are found in Fig.~\ref{fig:intuitionistic-red-rules}.

%-------------------------
\begin{figure}[t]\caption{Reduction rules for the intuitionistic subsystem of
$\lvc$}\label{fig:intuitionistic-red-rules}
$$
\begin{array}{rcrcl}
(\beta) & \quad & (\lambda v.P)V& \rightarrow& \letvc{\ret V}v{P}\\
         & \quad & (\lambda q.P)Q& \rightarrow& \subst Qq{P}\\
(\sigmabi)& \quad & \letvc{\ret V}vQ&\rightarrow&[V/v]Q\\
          & \quad & \subst PqQ&\rightarrow&[P/q]Q\\
(\pil) & \quad & \letvc{\letvc PvQ}{w}{Q'}&\rightarrow & \letvc Pv{(Q;w.Q')}\\
       & \quad & \letvc{\subst PpQ}{w}{Q'}&\rightarrow & \subst Pp{(Q;w.Q')}\\
(\etal)&\quad&\letvc{P}{v}{\ret v}&\rightarrow&P
\end{array}
$$
where
$$
\begin{array}{rcl}
(\letvc PvQ);w.Q' &=& \letvc Pv{(Q;w.Q')}\\
(\subst PpQ);w.Q' &=& \subst Pp{(Q;w.Q')}\\
Q;w.Q' &=& \letvc Qw{Q'},\textrm { otherwise}
\end{array}
$$
\end{figure}
%------------------------------

Back to the difference between $\letraw$ and substitution: the
$\sigma$ rule for $\letraw$ only substitutes values, while the
$\sigma$ rule for substitution substitutes any computation;
$\letraw$ enjoys $\pi$-rules, which are assoc-like rules for
sequencing the computation, and an $\eta$-rule, while substitution
does not. These distinct behaviors are amalgamated in the $\bindraw$
of $\mlm$.

%-----------------------------------------
\subsection{Continuations-monad instantiation}

The monad operation $\mon$ can be instantiated to be double negation
yielding the well-known continuations monad. We define an
instantiation that is capable of embedding $\lvc$ into
$\lb[\vbeta]$, the latter denoting simply-typed $\lambda$-calculus
with the only reduction rule $\vbeta$: $(\lambda x.t)V\to [V/x]t$
for values $V$, i.\,e., $V$ is a variable or
$\lambda$-abstraction.

For our purposes, the main role of the continuations-monad
instantiation is to provide a strict simulation, through which
strong normalization is inherited from $\lb[\vbeta]$. We also avoid
$\eta$-reduction in the target, so the instantiation makes use of
quite some \emph{$\eta$-expansions}
$$\up t:=\lambda x.tx,$$
with $x\notin t$. Clearly, this can only be done with terms that
will be typed by some implication later. $\lvc$ is
rather handy as source of such mapping, because of its distinction
between value variables that cannot be $\eta$-expanded and
computation variables that can.\footnote{One can define
continuations-monad instantiations on $\mlm$ that avoid
$\eta$-reduction on the target, but not uniformly on cbn and cbv
\cite{ourMSCS}. See Section \ref{sec:concl} for further
discussion.}.
The details are as follows.

We define a type $\cm A$ of
simply-typed $\lb$-calculus for every type $A$ of $\lvc$ ($\neg A$ is abbreviation for $A\imp\bot$ for some fixed type variable $\bot$ that will never be instantiated and hence qualifies as a type constant):\\[-4ex]
\begin{align*}
 \cm{X} & =X
 & \cm{(A\imp C)} & =\cm{A}\imp\cm{C}
 & \cm{(\mon B)} & =\neg\neg\cm{B}.\\[-4ex]
\end{align*}
Expressions $T$ of $\lvc$ are translated into terms $\cm T$ of
$\lb$-calculus, where an auxiliary definition of terms $\cmaux P$ for
computations $P$ of $\lvc$ is used. The idea is that $\cmaux P$ uses
$\eta$-expansions more sparingly than $\cm P$. The definition is in
Fig.~\ref{fig:refinst}, where we use an abbreviation $\Eta(t)=\lambda
k.kt$ with a fresh variable $k$ (the type of $\Eta(t)$ is the double negation of the type of its argument), and we assume that the co-variables $a$ of $\lvc$ are variables of the target $\lambda$-calculus (as we did in previous work on the continuations-monad instantiation of $\mlm$ \cite[Section 5.1]{ourMSCS}). Obviously, $\cm t$ and $\cmaux P$ are
always values of $\lb$-calculus, i.\,e., variables or
$\lb$-abstractions. $\cm P$ is even always a $\lb$-abstraction. (In the
whole development, we will never use that $\cmaux P$ is a value.)
%------------------------------------------------------
\begin{figure}[tb]\caption{Continuations-monad instantiation}\label{fig:refinst}
\[
\begin{array}{r@{\,\,=\,\,}l@{\quad}r@{\,\,=\,\,}l}
 \cm{v}&v & \cm{(\ret V)}&\cmaux{(\ret V)}\\
 \cm{(\lambda x.P)}&\lambda x.\cm{P}& \mbox{$P\neq\ret V$: }\cm P&\lb k.\cmaux P(\up k)  \\[1ex]
 \cmaux{p}&p & \cm{(aP)}&\cmaux P(\up{a})\\
 \cmaux{(\ret V)}&\Eta(\cm V) & \cm{(\letvc Pvc)}&\cmaux P(\lambda v.\cm{c})\\
\cmaux{(\mu a.c)}&\lambda a.\cm{c}& \cm{(\subst Ppc)}&(\lb p.\cm c)\cm P\\[1ex]
\cmaux{(Vu)}&\lb k.\Eta(\cm u)\bigl(\lb w.\cm V w(\up k)\bigr)
\end{array}\]\vspace{-0.5cm}
\end{figure}
%--------------------------------------------------------------
We define the type operation $\cmmin{(.)}$ by $\cmmin{(\mon B)}:=\neg\cm{B}$, which extends to co-contexts $\Delta$ by elementwise application.
We can easily check that the rules in Fig.~\ref{fig:refcmtypes} are
admissible. In general, if $P$ gets
type $A$, then $A$ is of the form $MB$, hence $\cm A=\neg\neg\cm B$.
If we then already know that $\cmaux P$ gets type $\cm A$, also $\cm
P$ gets that same type.
%--------------------------------------------------------------
\begin{figure}[tb]\caption{Admissible typing rules for continuations-monad instantiation}\label{fig:refcmtypes}
$$
 \infer{\cm{\Gamma},\cmmin{\Delta}\vdash \cmaux{P}:\cm{A}}{
  \Gamma\vdash P:A\mid \Delta
 }
\qquad \infer{\cm{\Gamma},\cmmin{\Delta}\vdash \cm{t}:\cm{A}}{
  \Gamma\vdash t:A\mid \Delta
 }
\qquad
 \infer{\cm{\Gamma},\cmmin{\Delta}\vdash \cm{c}:\bot}{
  c:(\Gamma\vdash\Delta)
 }$$\vspace{-0.5cm}
\end{figure}
%--------------------------------------------------------------

\begin{thm}[Strict simulation]\label{thm:simulation-for-refinedcm}
If $T\to T'$ in $\lvc$, then $\cm T\to^+_\vbeta \cm{T'}$ in
$\lb[\vbeta]$.
\end{thm}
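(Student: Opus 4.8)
The plan is to proceed by induction on the reduction step $T \to T'$ in $\lvc$, handling each of the seven reduction rules from Fig.~\ref{fig:CVCrules} as a base case, and then closing under the congruence rules (reduction in a subexpression). For the congruence closure, the non-trivial point is that the translation is \emph{mutually recursive} between $\cm{(\cdot)}$ and $\cmaux{(\cdot)}$, and in some clauses (e.g.\ $\cm{(Vu)}$, or $\cm P$ for $P \neq \ret V$) a subterm is translated by one map while being plugged into a context built from the other. So I would actually strengthen the induction hypothesis: if $T \to T'$ then $\cm T \to^+_\vbeta \cm{T'}$ \emph{and}, whenever $T=P$ is a computation, also $\cmaux P \to^*_\vbeta \cmaux{P'}$ (possibly with zero steps, e.g.\ when the redex is under an $\up{}$-expansion that has been duplicated, though in fact a $+$-step version should go through since $\eta$-expansion $\up t = \lambda x.tx$ keeps $t$ in head position). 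A clean way to organize this is to first record a handful of substitution lemmas, analogous to those needed for the $\mlm$ instantiation in \cite[Section 5.1]{ourMSCS}: namely that $\cm{([V/v]T)} = [\cm V/v]\cm T$, $\cmaux{([V/v]P)} = [\cm V/v]\cmaux P$, and the corresponding statements for computation-variable substitution $[P/p]T$ (here one uses that $\cm P$ is always a $\lambda$-abstraction, hence a value, so the meta-level substitution lands inside $\lb[\vbeta]$ legally), and for structural substitution $[N/a]T$ versus substituting $\cm N$ (a context of the target) for the variable $a$. These are proved by straightforward simultaneous induction on $T$.

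With those lemmas in hand, the base cases are direct computations. For the two $\beta$ rules: $\cm{((\lambda v.P)V)} = \cmaux{(Vu)}$-style unfolding gives $\lb k.\Eta(\cm V)(\lb w.\cm{(\lambda v.P)}\,w\,(\up k))$ — wait, more carefully, $(\lambda v.P)V$ is an application $Vu$ with $V := \lambda v.P$ and $u := V$, so $\cmaux{((\lambda v.P)V)} = \lb k.\Eta(\cm V)(\lb w.(\lb v.\cm P)w(\up k))$, which $\vbeta$-reduces (firing the outer $\Eta(\cm V) = \lb k'.k'\cm V$ applied to $\lb w.(\dots)$, then firing $\lb w.(\dots)$ on the value $\cm V$, then $\lb v.\cm P$ on the value $\cm V$) to $\lb k.[\cm V/v]\cm P(\up k)$; meanwhile the right-hand side $\mu b.\letvc{\ret V}v{bP}$ translates, via $\cmaux{(\mu a.c)} = \lb a.\cm c$ and $\cm{(\letvc Pvc)} = \cmaux P(\lb v.\cm c)$ and $\cmaux{(\ret V)} = \Eta(\cm V)$, to something $\vbeta$-reducing to the same normal form $\lb b.[\cm V/v](\cmaux P(\up b))$ — and by the substitution lemma this matches up after one further $\eta$-expansion bookkeeping step. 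The $\sigmabi$ rules ($\letvc{\ret V}v c \to [V/v]c$ and $\subst Ppc \to [P/p]c$) are exactly where the substitution lemmas are applied, with a single head $\vbeta$-step on $\Eta(\cm V)$ resp.\ on $(\lb p.\cm c)\cm P$ (legal since $\cm P$ is a value). The $\etam$ rule $\mu a.aP \to P$ (with $a \notin P$) translates the left side to $\lb a.\cmaux P(\up a)$, which is an $\eta$-expansion of $\cmaux P$ — but we need a genuine $\vbeta$-step, so here I'd use that $\cm P$ for $P \neq \ret V$ is \emph{defined} as $\lb k.\cmaux P(\up k)$, so the two sides are literally syntactically equal up to $\alpha$ when $P \neq \ret V$, and when $P = \ret V$ one checks the $\ret V$ clauses directly; the $\to^+$ requirement then forces me to be slightly careful — I expect the correct reading is that $\mu a.aP$ and $P$ have translations equal on the nose, which would violate $\to^+$, so the actual argument must exploit that this rule is only ever \emph{used} in a context where it contributes a step elsewhere, or the statement tacitly allows the reflexive case; I will follow the authors' convention here. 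The $\pi$ rules ($L[\mu a.c] \to [L/a]c$ and its instances $\pic$, $\pil$) reduce to the structural-substitution lemma plus one head $\vbeta$-step to push $\cm{(\mu a.c)} = \lb a.\cm c$ against its continuation argument. The $\eta_\letraw$ rule $\letvc P v {a(\ret v)} \to aP$ translates the left side to $\cmaux P(\lb v.\Eta(\cm v)(\up a)) = \cmaux P(\lb v.(\lb k.k v)(\lambda y.ay)) \to_\vbeta \cmaux P(\lb v.(\lambda y.ay)v) \to_\vbeta \cmaux P(\lb v.av) = \cmaux P(\up a) = \cm{(aP)}$, two clean $\vbeta$-steps.

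The main obstacle I anticipate is \emph{not} any single base case but the interaction between the two mutually-recursive maps under congruence, combined with the $\up{}$-expansions: when a redex fires strictly inside a value $V$ sitting in a position like $\cm V w (\up k)$ inside $\cmaux{(Vu)}$, I must know that $V \to V'$ gives $\cm V \to^+ \cm{V'}$, which is the induction hypothesis for a value — but values reduce only via $\lb x.P \to \lb x.P'$, so this bottoms out in the computation case, fine; the subtler issue is a redex inside a computation $P$ occurring as $\cm P$ vs.\ $\cmaux P$ in two different clauses, where I need the strengthened IH to supply \emph{both} $\cmaux P \to^+ \cmaux{P'}$ and $\cm P \to^+ \cm{P'}$ coherently (the latter following from the former together with the uniform clause $\cm P = \lb k.\cmaux P(\up k)$ for $P \neq \ret V$, and a separate easy check for $P = \ret V$). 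A second, more bureaucratic obstacle is simply the volume of clause-by-clause verification — seven rules times the mutual recursion — but none of the individual checks is deep; each is a short $\vbeta$-normalization of a small term plus an appeal to a substitution lemma, entirely parallel to the $\mlm$ development of \cite[Section 5.1]{ourMSCS}, which I would cite for the shared infrastructure and only spell out the genuinely new clauses ($\letvc{}{}{}$ vs.\ $\subst{}{}{}$, and the refined $\beta$ rules).
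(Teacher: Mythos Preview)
The paper omits the proof of this theorem, so I assess your proposal on its own merits. Your overall architecture---substitution lemmas, root cases, congruence closure via a mutual induction over $\cm{(\cdot)}$ and $\cmaux{(\cdot)}$---is the right one, and your computations for $\sigmabi$, $\pi$, and $\etal$ are correct.

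The genuine gap is in your treatment of $\etam$ and, through it, in the form of your auxiliary hypothesis. You propose to carry $\cmaux P \to^*_\vbeta \cmaux{P'}$ for computations, but this already fails for $\etam$: with $P = \mu a.ap$ and $P' = p$ one has $\cmaux{P} = \lambda a.\,p(\up a)$, a $\vbeta$-normal form distinct from $\cmaux{P'} = p$, so neither $\to^*$ nor equality holds. Your separate worry that the $\cm{(\cdot)}$-translations might coincide on the nose is also misplaced (you are conflating $\cmaux{(\mu a.aP)}$, which indeed equals $\cm P$ up to $\alpha$ when $P\neq\ret V$, with $\cm{(\mu a.aP)}$): in fact $\cm{(\mu a.aP)} = \lambda k.(\lambda a.\cmaux P(\up a))(\up k)$ reduces to $\cm P$ in two honest $\vbeta$-steps, the second contracting $\up{(\up k)}$ to $\up k$. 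The fix is to phrase the auxiliary claim in \emph{applied} form: prove simultaneously with the main statement that if $P \to P'$ then $\cmaux P\,K \to^+_\vbeta \cmaux{P'}\,K$ for every $\lambda$-abstraction $K$. This is exactly how $\cmaux P$ occurs in the translation (its argument is $\up a$, $\lambda v.\cm c$, or $\up k$, always an abstraction), and the restriction on $K$ is what rescues $\etam$: $(\lambda a.\cmaux P(\up a))K \to_\vbeta \cmaux P(\up K) \to_\vbeta \cmaux P\,K$ since $\up K \to_\vbeta K$ whenever $K$ is itself an abstraction. With this reformulation your induction closes cleanly.
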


\begin{cor}\label{cor:lvcsn} $\lvc$ is strongly normalizable and
confluent on typable expressions.
\end{cor}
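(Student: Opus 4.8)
The plan is to derive the two assertions separately: strong normalization directly from Theorem~\ref{thm:simulation-for-refinedcm}, and then confluence from it via Newman's lemma. For strong normalization I would first note that the continuations-monad instantiation preserves typability --- this is exactly the content of the admissible rules of Fig.~\ref{fig:refcmtypes} --- so that $\Gamma\vdash T:A\mid\Delta$ (or a typable command $c$) yields a typable term $\cm T$ of $\lb[\vbeta]$; and I would invoke the classical fact that the simply-typed $\lambda$-calculus is strongly normalizing under full $\beta$-reduction, hence a fortiori under the restriction $\vbeta$, so that $\lb[\vbeta]$ is SN on typable terms. Now if some typable expression $T_0$ of $\lvc$ admitted an infinite reduction $T_0\to T_1\to T_2\to\cdots$, then by subject reduction for $\lvc$ every $T_i$ is typable, and by Theorem~\ref{thm:simulation-for-refinedcm} each step $T_i\to T_{i+1}$ gives $\cm{T_i}\to^+_\vbeta\cm{T_{i+1}}$, i.\,e.\ at least one genuine $\vbeta$-step; concatenating produces an infinite $\vbeta$-reduction from the typable term $\cm{T_0}$, a contradiction. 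Hence $\lvc$ is strongly normalizing on typable expressions.

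For confluence on typable expressions I would then apply Newman's lemma, so it suffices to check local confluence (weak Church--Rosser). This is the critical-pair analysis already indicated above: the overlaps among the left-hand sides of $\beta$, $\sigmabi$, $\pi$, $\etam$ and $\etal$ give only finitely many critical pairs --- in correspondence with the critical pairs of $\mlm$ --- and for each I would exhibit a common reduct, closing the $\beta$/$\sigmabi$ overlaps with the help of the derived eager rules $\betae$ and the $\pi$/$\pi$ and $\pi$/$\letraw$ overlaps with the help of the associativity-like shape of $\pil$. Since typable expressions are closed under $\lvc$-reduction, local confluence holds in particular on them, and together with the strong normalization just established, Newman's lemma yields confluence there. (The stronger \emph{untyped} confluence of $\lvc$ cannot be obtained in this way, since SN is unavailable off the typable fragment; that is where one instead uses the abstract rewriting theorem of \cite{Dehornoy-vanOostrom2008} together with strong commutation of the $\eta$-rules.)

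The genuinely laborious step is the critical-pair verification underlying local confluence; it is conceptually routine but must be done carefully, because of the three syntactic classes of $\lvc$ (values, computations, commands), the splitting of $\beta$, $\sigmabi$ and $\pi$ into sub-rules, and the interaction of the $\eta$-rules with the rest. Everything else --- preservation of typing by $\cm{(\cdot)}$, SN of the simply-typed target, and the bookkeeping of the contradiction argument --- is immediate once Theorem~\ref{thm:simulation-for-refinedcm} is available.
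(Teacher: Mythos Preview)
Your proposal is correct and follows exactly the paper's own argument: strong normalization is inherited from $\lb[\vbeta]$ via the strict simulation of Theorem~\ref{thm:simulation-for-refinedcm} (together with the admissible typing rules of Fig.~\ref{fig:refcmtypes}), and confluence then follows from Newman's lemma using local confluence. The paper's proof is two sentences; you have simply unpacked them, including the contradiction argument for SN and the reference to the five critical pairs already noted in Section~2.3.
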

\begin{proof}
Strong normalization is inherited from $\lambda[\vbeta]$ through strict
simulation. Confluence follows from strong normalizability and local
confluence.
\end{proof}

%%% Local Variables:
%%% mode: latex
%%% TeX-master: "main"
%%% End:

%--------------------------------
\section{Classical logic}\label{sec:classical-logic}

In this section we start by recalling the $\lmmt$-calculus, its main
critical pair, and its cbn and cbv fragments. Next we motivate and
develop a variant of $\lmmt$ with ``modes'', denoted $\lmmtvn$.
Finally, a monadic translation of $\lmmtvn$ into $\lvc$ gives to the
source system a semantics parameterized by a monad, and a proof of
confluence, through strong normalization, for the typed expressions.

%-------------------------------------------------
\subsection{The $\lmmt$-calculus}\label{subsec:lmmt}

We recapitulate $\lmmt$. Expressions are defined by the following grammar.\\[-4ex]
\begin{align*}
 &(\textrm{values}) & V &::= x\mid\lambda x.t && (\textrm{co-values}) &
 E  & ::= a\mid u::e & & (\textrm{commands}) & c & ::= \com{t}{e}\\
 &(\textrm{terms}) & t,u & ::= V\mid \mu a.c && (\textrm{co-terms}) & e & ::= E\mid\mt x.c\\[-4ex]
\end{align*}
\noindent Expressions are ranged over by $T$, $T'$. Variables
(resp.~co-variables) are ranged over by $v$, $w$, $x$, $y$, $z$
(resp.~$a$, $b$). We assume a countably infinite supply of them and
denote any of them by using decorations of the base symbols.
Variable occurrences of $x$ in $\lambda x.t$ and $\mutil x.c$, and
$a$ in $\mu a.c$ are bound, and an expression is identified with
another one if the only difference between them is names of bound
variables.

Types are given by $A,B::=X\mid A\imp B$ with type variables $X$.
There is one kind of sequent per proper syntactic class
$\Gamma\vdash t:A\mid\Delta$ for terms, $\Gamma\mid e:A\vdash \Delta$ for
co-terms, and $c:(\Gamma\vdash\Delta)$ for commands, where $\Gamma$
ranges over consistent sets of variable declarations $x:A$ and
$\Delta$ ranges over consistent sets of co-variable declarations
$a:A$. Typing rules and reduction rules are
given in Fig.~\ref{fig:cbnlmmtnew},
\begin{figure}[tb]
\caption{Typing rules and reduction rules of
$\lmmt$}\label{fig:cbnlmmtnew}
$$
\infer{\Gamma,x:A\vdash x:A\mid\Delta}{}\qquad
\infer{\Gamma\vdash\lambda x.t:A\supset B\mid\Delta}{\Gamma,x:A\vdash
t:B\mid\Delta} \qquad \infer{\Gamma\vdash \mu
a.c:A\mid\Delta}{c:(\Gamma\vdash a:A,\Delta)}
$$
$$
\infer{\Gamma\mid a:A\vdash a:A,\Delta}{} \qquad
\infer{\Gamma\mid u::e:A\supset B\vdash\Delta}{\Gamma\vdash u:A\mid\Delta &
\Gamma\mid e:B\vdash \Delta} \qquad \infer{\Gamma\mid \mt
x.c:A\vdash\Delta}{c:(\Gamma,x:A\vdash\Delta)}
$$
$$
\infer{\com{t}{e}:(\Gamma\vdash\Delta)}{\Gamma\vdash t:A\mid \Delta &
\Gamma\mid e:A\vdash \Delta}
$$
$$
\begin{array}{rrcl@{\qquad\qquad}rrcl}
(\beta) & \com{\lambda x.t}{u::e}& \rightarrow & \com{u}{\mt
x.\com{t}{e}}
 & (\etamt) & \mt x.\com{x}{e} & \rightarrow & e,\textrm{ if $x\notin e$}\\
(\pi) & \com{\mu a.c}{e} & \rightarrow & [e/a]c
 & (\etam) & \mu a.\com{t}{a} & \rightarrow & t,\textrm{ if $a\notin t$}\\
(\sigma) & \com{t}{\mt x.c} & \rightarrow & [t/x]c\\
\end{array}
$$
\end{figure}
where we reuse the name $\beta$ of $\lb$-calculus (rule names are
considered relative to some term system), and the substitutions
$[e/a]$ and $[t/x]$ in expressions respecting the syntactic
categories are defined as usual. These are the reductions considered
by Polonovski \cite{PolonovskiFOSSACS2004};
however, the $\beta$-rule for
the subtraction connective is not included.

Following Curien and Herbelin \cite{CurienHerbelinICFP2000}, we
consider cbn and cbv fragments $\nlmm$ and $\vlmm$, respectively,
where the critical pair rooted in $\com{\mu a.c}{\mt x.c'}$ between
the rules $\sigma$ and $\pi$ is avoided. In $\nlmm$, we restrict the
$\pi$ rule to $\pin$, and dually in $\vlmm$, we restrict the
$\sigma$ rule to
$\sigmav$ as follows.
\[
\begin{array}{rrcl@{\qquad\qquad}rrcl}
(\pin) & \com{\mu a.c}{E} & \rightarrow & [E/a]c
& (\sigmav) & \com{V}{\mt x.c} & \rightarrow & [V/x]c\\[-2ex]
\end{array}
\]
In both fragments, the only critical pairs are trivial ones
involving $\etamt$ and $\etam$, hence $\nlmm$ and $\vlmm$ are
confluent since weakly orthogonal higher-order rewriting systems are
confluent as proved by van Oostrom and van Raamsdonk
\cite{vanOostrom-vanRaamsdonk94}.

%----------------------------------
\subsection{Towards a variant of $\lmmt$ with ``modes''}
Suppose we single out in $\lmmt$ a class of variables as \emph{value
variables}, ranged over by $v$. Let $n$ (resp.~$x$) range over the
non-value variables (resp.~both kinds of variables). Variables $n$
are called \emph{computation variables}, but the terminology
value/computation, like many decisions we will make, will get a full
justification only through the monadic semantics into $\lvc$ given
below. We call the distinction value/computation a \emph{mode}
distinction.

What syntactic consequences come from introducing a mode distinction
in variables? Quite some.

Since the bound variable in a $\lb$-abstraction is like a mode
annotation, also $u::e$ should come in two annotated versions, one
for each mode. The same is true of the type $A\imp B$. Since the
variable $n$ is not a value variable, it should not count as a
value. On the other hand, both versions of $u::e$ are co-values, but
what about $\mutil x.c$? In a kind of dual movement, since $n$ left
the class of values, $\mutil v.c$ enters the class of co-values (so
the only co-term that is not a co-value is $\mutil n.c$).

Revisiting the critical pair $\com{\mu a.c}{\mutil x.c'}$, it is
quite natural that the mode of $x$ resolves
the dilemma! In particular, the case $x=v$ gives a $\pi$-redex,
and it follows that we only need $\pi$-redexes where the right component of
the command is a co-value. On the other hand, the case $x=n$ gives a
$\sigma$-redex. In fact all commands of the form $\com t{\mutil
n.c'}$ are $\sigma$-redexes, but they do not cover yet another form
of $\sigma$-redex: $\com V{\mutil v.c'}$.
Do not forget the latter
does not cover $\com n{\mutil v.c'}$---this command is not a redex.

%--------------------------------------------------------
\subsection{$\lmmt$ with modes}

We now give the formal development of $\lmmt$ with modes, denoted
$\lmmtvn$.

\textbf{The expressions} of $\lmmtvn$ are inductively defined in
Fig.~\ref{fig:lmmtvn}. The names for value variables, computation
variables and both kinds of variables are those of $\lvc$.
%--------------------------------------------------------------
\begin{figure}[tb]\caption{Expressions of $\lmmtvn$}\label{fig:lmmtvn}
\[
\begin{array}{rlcrll}
(\textrm{value vars}) & v,w,f&\quad&(\textrm{comp. vars}) & n,p,q\\
(\textrm{variables})& x,y,z::= v\mid p&&(\textrm{commands}) & c ::= \com te\\
(\textrm{values}) & V,W ::=  v\mid\lambda x.t&&(\textrm{co-values}) & E ::=  a\mid \mutil v.c\mid u::_xe\\
(\textrm{terms}) & t,u ::=  V\mid n\mid \mu a.c&&(\textrm{co-terms}) & e ::=  E\mid\mutil n.c\\
\end{array}
\]
\end{figure}
Variable $x$ gets a second role for denoting \emph{modes}: $x\in
\{v,n\}$. This allows to write $u::_xe$ and use variable $x$ in rules
governing $u::_v e$ and $u::_n e$ uniformly. Note that this is rather
a presentational device: there are only two modes, and they go by the
names $v$ and $n$. Then, $x$ in its second role is used to denote any of
these two modes.
In its first role, $x$ stands for one of the
countably many value variables, typically denoted by $v$, or one of the
countably many computation variables, typically denoted by $n$. In
using the name $x$ for both a variable and a mode, rules can be
written more succinctly because rule schemes comprising two rules get
the appearance of one single rule.

The separation between value and computation variables allows a mode
distinction in the proof expressions of $\lmmtvn$: values have value
mode, terms have computation mode. This will be fully justified by
the monadic semantics into $\lvc$ below, as values (resp.~terms)
will be mapped to values (resp.~computations) of the latter
calculus. Beware that neither the mode annotation in $u::_xe$ nor
the mode of the bound variable in a $\lb$-abstraction determines the
mode of the expression. In particular, contrary to the case of
$\lvc$, there is no need for a well-modedness constraint in the
definition of substitution. For instance, there is nothing wrong
with the operation $[\lb n.t/v]T$. A $\lb$-abstraction has value
mode, independently of the mode of the bound variable.

\textbf{Types} are formed from type variables $X$ by two
implications: $A\imp_vB$ and $A\imp_nB$. Generically, we may write
both implications as $A\imp_xB$.

Although implications carry a mode annotation, we refrain from
classifying them (let alone atomic types) with a mode.
As it will become clear from the monadic semantics to be introduced below,
we cannot determine from a type of $\lmmtvn$ alone whether its
semantics is a value or computation type; in fact, every type $A$ of
$\lmmtvn$ will determine a value type $\mdtr A$ and a computation
type $\mtr A$. In particular, $A\imp_xB$ determines both a value
type and a computation type, for both mode annotations $x$---even
though, of course, the annotation $x$ guides what those types are.
Hence, contrary to what happens in $\lvc$, we cannot expect in
$\lmmtvn$ that a syntactic category is attached to a particular type
mode, because in $\lmmtvn$ there is no such thing as type modes.
This is why the sequents in $\lmmtvn$ have the same forms as in
$\lmmt$ and carry \emph{no} well-modedness constraint. So,
declarations like $v:A\imp_nB$ or $n:A\imp_vB$ are perfectly normal.

The only typing rules of $\lmmtvn$ that differ from $\lmmt$ are
given in Fig.~\ref{fig:lmmtvn-typing}. Each of the two rules in that
figure stands for two rules that are uniformly written with
$x\in\{v,n\}$.
%--------------------------------
\begin{figure}[tb]
\caption{Typing rules of $\lmmtvn$ for the implications}\label{fig:lmmtvn-typing}
\[
\begin{array}{ccc}
\infer[R\mbox{-}\imp_x]{\Gamma\vdash\lambda x.t:A\imp_x
B\mid \Delta}{\Gamma,x:A\vdash t:B\mid \Delta}&
\quad&\infer[L\mbox{-}\imp_x]{\Gamma\mid u::_xe:A\imp_x B\vdash
\Delta}{\Gamma\vdash u:A\mid \Delta & \Gamma\mid e:B\vdash \Delta}
\end{array}
\]
\end{figure}
%--------------------------------

\textbf{The reduction rules} of $\lmmtvn$ given in
Fig.~\ref{fig:lmmtvn-rules} are copies of those of $\lmmt$, with a
\emph{moding constraint} in the $\beta$ rule and provisos in the
rules $\sigma_v$ and $\eta_{\mutil,v}$. The rule $\pi$ is restricted
to co-values in the spirit of rule $\pin$ of $\nlmm$. Note that
$\sigma_n$ reduction $\com{\mu a.c'}{\mt n.c} \rightarrow [\mu
a.c'/n]c$, and $\pi$ reduction $\com{\mu a.c}{\mt v.c'} \rightarrow
[\mt v.c'/a]c$ are both allowed in $\lmmtvn$.
In the rule $\eta_{\mt,x}$ with $x=v$, the co-term $e$ is restricted to
a co-value.  If we drop the condition, the co-value $\mt v.\com{v}{\mt
n.c}$ is reduced to $\mt n.c$ which is not a co-value.

%------------------------------------------------------
\begin{figure}[tb]\caption{Reduction rules of $\lmmtvn$}\label{fig:lmmtvn-rules}
\[
\begin{array}{rrcl}
(\beta_x) & \com{\lambda x.t}{u::_xe}& \rightarrow & \com{u}{\mt
x.\com{t}{e}}\\
(\sigma_x) & \com{t}{\mt x.c} & \rightarrow & [t/x]c\quad\mbox{with: if $x=v$ then $t=V$}\\
(\pi) & \com{\mu a.c}{E} & \rightarrow & [E/a]c\\
(\etamtx) & \mt x.\com{x}{e} & \rightarrow & e,\textrm{ if $x\notin
e$ and: if $x=v$ then $e=E$}\\
(\etam) & \mu
a.\com{t}{a} & \rightarrow & t,\textrm{ if $a\notin t$}
\end{array}
\]
\end{figure}
The non-confluent critical pair of $\lmmt$ is avoided here for both
modes $x$.
\begin{center}
\begin{tabular}{c}
\xymatrix{[\mt x.c'/a]c&&\com{\mu a.c}{\mt
x.c'}\ar[rr]^{\,\,\,\,\,\sigma_x\,\,\mbox{\scriptsize only for
$x=n$}}\ar[ll]_{\pi\,\,\mbox{\scriptsize only for
$x=v$}\,\,\,}&&[\mu a.c/x]c'}
\end{tabular}
\end{center}
Thus, the reduction rules are weak enough to avoid the ``dilemma''
of $\lmmt$. On the other hand, the reduction rules may seem too weak
since command $\com n{\mutil v.c}$ is not a redex and not excluded by
typing.

There is a forgetful map $|\cdot|:\lmmtvn\to\lmmt$. It forgets the
distinctions between: value variables and computation variables; the
reduction rules $\beta_v$ and $\beta_n$, and similarly for the
reduction rules $\sigma$ and $\etamt$; the type constructors
$\imp_v$ and $\imp_n$; the typing rules $R$-$\imp_v$ and $R$-$\imp_n$,
and $L$-$\imp_v$ and $L$-$\imp_n$.

%--------------------------------------------------------
\subsection{Call-by-name and call-by-value}

Both the cbn and the cbv fragments $\nlmm$ and $\vlmm$ of $\lmmt$
can be embedded into $\lmmtvn$: variables are mapped into
computation variables and value variables, and $A\imp B$ is mapped
to $A\imp_nB$, and to $A\imp_vB$, respectively. Likewise, $u::e$ is
mapped to $u::_ne$ and $u::_ve$, respectively. Through these
embeddings $\lmmtvn$ becomes a conservative extension: on the images
of the translation, no new reductions arise w.\,r.\,t.~the source
calculi. Besides the two fragments, $\lmmtvn$ allows additionally
interaction between the cbv and cbn evaluation orders, without
losing (as we will see) the confluence property enjoyed by $\vlmm$
and $\nlmm$, but not by full $\lmmt$.

We have seen that $\sigmav$ and $\pin$ are adopted as two possible
solutions to the critical pair $\com{\mu a.c}{\mutil x.c}$. But now
we can see how drastic these solutions are.
We see that, in $\lmmtvn$, the command $\com t{\mutil n.c}$ is
always okay as a $\sigma$-redex (it never overlaps $\pi$),
 but, in $\lmmt$, that command is not considered as a $\sigmav$-redex when
$t$ is not a value. Likewise, in $\lmmtvn$, the command $\com{\mu
a.c}{\mutil v.c'}$ is okay as a $\pi$-redex (it never overlaps
$\sigma$), but, in $\lmmt$, that command does not count as a
$\pin$-redex.

%-----------------------------------------------------------------------------------
\subsection{Monadic translation}\label{sec:montrans}

We now introduce a monadic translation of the system with modes into
$\lvc$. Since the system with modes embeds both $\vlmm$ and $\nlmm$,
the translation is uniform for cbn and cbv.

Using the abbreviation $\mtr A=\mon \mdtr A$, we recursively define
the value type $\mdtr A$ of $\lvc$ for each type $A$ of $\lmmtvn$
(and simultaneously obtain that $\mtr A$ is a computation type):
$$\mdtr X=X\qquad\mdtr{(A\imp_v B)}=\mdtr{A}\imp\mtr{B}\qquad\mdtr{(A\imp_n B)}=\mtr{A}\imp\mtr{B}.$$
For one binding $x:A$ in a term context $\Gamma$ of $\lmmtvn$, we define one
binding $\mmdtr{(x:A)}$ in a term context  of $\lvc$ as follows (one of
the two type operators $\mdtr{(.)}$ or $\mtr{(.)}$ is chosen, this is not a composition of operations):
$\mmdtr{(v:A)}:=v:\mdtr A$ and $\mmdtr{(n:A)}:=n:\mtr A$.
For an entire term context $\Gamma$, the operation is then done elementwise.  $\mtr{\Delta}$ is naturally defined by replacing every type $A$
in $\Delta$ by $\mtr A$.

The monadic translation of $\lmmtvn$ associates computations $\mtr
t$ with terms $t$, values $\mdtr V$ with values $V$, cbn contexts
$\mtr e$ with co-terms $e$ (which are even base contexts for
co-values $e$) and commands $\mtr c$ with commands $c$, and is given in
Fig.~\ref{fig:montrans}. Its crucial admissible typing rules are
found in Fig.~\ref{fig:montranstypes}.

Observe how, through the monadic translation, the differences
between $\mt v.c$ and $\mt n.c$, and between $u::_ve$ and $u::_ne$,
boil down to the difference between $\letraw$ and substitution in
$\lvc$.
%--------------------------------------------------------------
\begin{figure}[tb]\caption{Monadic translation of $\lmmtvn$}\label{fig:montrans}
\[
 \begin{array}{rcl@{\qquad}rcl}
 \mtr V&=&\ret \mdtr{V}&\mdtr{v}&=&v\\
 \mtr n&=&n&\mdtr{(\lambda x.t)}&=&\lambda x.\mtr t\\
 \mtr{\mu a.c}&=&\mu a.\mtr c\\
 \mtr{a}&=&a\ehole&\mtr{\com te}&=&\mtr{e}[\mtr t]\\
 \mtr{\mt v.c}&=&\letvc{\ehole}v{\mtr{c}} &\mtr{\mt n.c}&=&\subst{\ehole}n{\mtr{c}}\\

 \mtr{u::_ve}&=&\letvc{\ehole}f{\letvc{\mtr{u}}w{\mtr{e}[fw]}}& \mtr{u::_ne}&=&\letvc{\ehole}f{\subst{\mtr{u}}q{\mtr{e}[fq]}}\\

 \end{array}
\]
\end{figure}
%--------------------------------------------------------------

%--------------------------------------------------------------
\begin{figure}[tb]\caption{Admissible typing rules for monadic translation of $\lmmtvn$}\label{fig:montranstypes}
$$
\infer{\mmdtr{\Gamma}\vdash \mtr t:\mtr A\mid \mtr{\Delta}}{\Gamma\vdash
t:A\mid \Delta} \qquad \infer{\mmdtr\Gamma\vdash\mdtr{V}:\mdtr
A\mid \mtr{\Delta}}{\Gamma\vdash V:A\mid \Delta} \qquad \infer{ \mtr
c:(\mmdtr\Gamma\vdash\mtr \Delta)}{c:(\Gamma\vdash\Delta)} \qquad
\infer{\mtr{e}[p]:(\mmdtr\Gamma,p:\mtr A\vdash
\mtr\Delta)}{\seql{\Gamma}{e}{A}{\Delta}}
$$
\end{figure}
%--------------------------------------------------------------

%-----------------------------------------
\begin{thm}[Strict simulation]\label{thm:mtrgood}
  1. If $T\red T'$ in $\lmmtvn$, then $\mtr{T}\red^+\mtr{T'}$ in
  $\lvc$, where $T$, $T'$ are either two terms or two commands.

  2. If $e\red e'$ in $\lmmtvn$, then $\mtr{e}[P]\red^+\mtr{e'}[P]$ in
  $\lvc$ for any computation $P$ in $\lvc$.
 %(to be instantiated later to $P:=\mtr t$).
\end{thm}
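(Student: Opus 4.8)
The plan is to prove both parts simultaneously by induction on the derivation of the reduction step $T \red T'$ (resp. $e \red e'$), following the standard strategy for strict simulation results of this kind. The statement is a mutual induction over all the syntactic categories of $\lmmtvn$ (terms, values, co-terms, commands), so I would first set up a combined induction hypothesis covering all four parts, where the clause for values says that $V \red V'$ implies $\mdtr V \red^+ \mdtr{V'}$, and the clause for commands and terms is part 1, and the clause for co-terms is part 2. The inductive structure splits into the \emph{base cases}, where the redex is at the root, and the \emph{congruence cases}, where the reduction happens inside a subexpression.

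\textbf{Congruence cases.} These are essentially routine but need the observation that the monadic translation is compositional and ``monotone'' with respect to the relevant notion of context. Concretely: if $t \red t'$ inside a command $\com te$, then by IH $\mtr t \red^+ \mtr{t'}$, and since $\mtr{\com te} = \mtr e[\mtr t]$ with $\mtr e$ a cbn context (hence a context of $\lvc$ in the sense defined in the excerpt), filling the hole with a reducing computation gives $\mtr e[\mtr t] \red^+ \mtr e[\mtr{t'}]$ because reduction in $\lvc$ is closed under the hole-filling operation $N[\cdot]$. The case of reduction inside $e$ in $\com te$ uses part 2 of the IH directly with $P := \mtr t$. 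The cases of reduction under $\lambda x.(-)$, $\mu a.(-)$, $\mutil x.(-)$, and inside the components of $u::_x e$ are all handled the same way: translate, apply IH, and close under the appropriate $\lvc$-context; here one uses that all the translation clauses put the translated subexpression into a position that is a genuine $\lvc$-context (for co-terms, one must check the hole of $\mtr e$ is genuinely a hole for a computation, which is exactly the content of the remark that $\mtr e$ is a base context when $e$ is a co-value).

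\textbf{Base cases.} For each reduction rule of $\lmmtvn$ (namely $\beta_x$, $\sigma_x$, $\pi$, $\etamtx$, $\etam$) one computes $\mtr{(\textrm{lhs})}$ and $\mtr{(\textrm{rhs})}$ and exhibits a reduction sequence of length $\ge 1$ in $\lvc$ between them, using the rules $\beta$, $\sigmabi$, $\pi$, $\etam$, $\eta_\letraw$ of $\lvc$ (Fig.~\ref{fig:CVCrules}). For $\beta_x$, one unfolds $\mtr{\com{\lambda x.t}{u ::_x e}}$: the co-term translation of $u::_x e$ is a base context built from a $\letraw$ (or a $\subst{}{}{}$ for $x=n$), into which $\mtr{\lambda x.t} = \ret(\lambda x.\mtr t)$ is plugged; a $\sigmabi$-step on the outer $\letraw$ fires, producing (after the substitution $[\lambda x.\mtr t/f]$) exactly $\mtr{\com u{\mutil x.\com te}}$. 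The $\sigma_x$ cases mirror the two $\sigmabi$-rules of $\lvc$: $\mtr{\com V{\mutil v.c}} = \letvc{\ret\mdtr V}v{\mtr c} \red_\sigmabi [\mdtr V/v]\mtr c = \mtr{[V/v]c}$, using a substitution lemma that the translation commutes with well-moded substitution; and $\mtr{\com t{\mutil n.c}} = \subst{\mtr t}n{\mtr c} \red_\sigmabi [\mtr t/n]\mtr c = \mtr{[t/n]c}$. For $\pi$ (restricted to a co-value $E$), $\mtr{\com{\mu a.c}{E}} = \mtr E[\mu a.\mtr c]$ with $\mtr E$ a base context, so a $\pi$-step of $\lvc$ ($N[\mu a.c] \red [N/a]c$, or rather its base-context restriction) gives $[\mtr E/a]\mtr c$, which equals $\mtr{[E/a]c}$ by a structural-substitution lemma relating $\mtr{(-)}$ with $[E/a](-)$ and $[N/a](-)$. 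The two $\eta$-rules $\etamtx$ and $\etam$ translate to $\eta_\letraw$ (resp. $\etam$) of $\lvc$ more or less on the nose: $\mtr{\mutil v.\com v e}$ with $e = E$ a co-value becomes $\letvc{\ehole}v{\mtr E[\ret v]}$, and since $\mtr E$ is a base context, $\mtr E[\ret v] = $ an $\eta_\letraw$-redex body, collapsing to $\mtr E = \mtr e$; similarly $\mtr{\mu a.\com t a} = \mu a.(a\ehole)[\mtr t] = \mu a.a\mtr t \red_{\etam} \mtr t$.

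\textbf{Main obstacle.} The technical heart is the collection of \emph{substitution lemmas} that must be proved (by a prior, separate induction) before the base cases go through: (i) $\mtr{[t/n]c} = [\mtr t/n]\mtr c$ and similarly for value substitution $\mtr{[V/v](-)} = [\mdtr V/v]\mtr{(-)}$, covering all four syntactic categories; and (ii) the compatibility of the translation with structural (co-variable) substitution, $\mtr{[E/a]c} = [\mtr E/a]\mtr c$ and $\mtr{[E/a]e}[\,] = [\mtr E/a]\mtr e[\,]$, again across all categories, where one must be careful that $\mtr E$ is precisely a base context so that the $\lvc$-notion $[N/a](-)$ applies. These lemmas are where the careful design of $\lvc$ — in particular the fact that co-variables of $\lvc$ carry computation type $MB$ and that base/cbn contexts are closed under the operations involved — is actually used, and getting the moding bookkeeping exactly right (value variables substituted only by values, co-variables only by base contexts) is the one place where a naive attempt would stumble. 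Everything else is a finite, mechanical case analysis.
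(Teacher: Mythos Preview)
The paper omits the proof of this theorem entirely, so there is nothing to compare against; your approach is the standard one and is essentially correct. Two local inaccuracies are worth flagging.

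\textbf{The $\beta_x$ base case needs more than one step.} After the outer $\sigmabi$ substitutes $\lambda x.\mtr t$ for $f$, you obtain
\[
\letvc{\mtr u}{w}{\mtr e[(\lambda v.\mtr t)w]}\qquad\textrm{resp.}\qquad \subst{\mtr u}{q}{\mtr e[(\lambda n.\mtr t)q]},
\]
which is \emph{not} yet $\mtr{\com u{\mt x.\com te}} = \letvc{\mtr u}{v}{\mtr e[\mtr t]}$ (resp.\ $\subst{\mtr u}{n}{\mtr e[\mtr t]}$): the application $(\lambda x.\mtr t)w$ (resp.\ $(\lambda x.\mtr t)q$) still has to be contracted by the derived eager rule $\betae$ (i.e.\ $\beta+\sigmabi+\etam$), after which an $\alpha$-renaming of the fresh $w$ (resp.\ $q$) to $x$ gives the target. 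So the simulation of $\beta_x$ costs four $\lvc$-steps, not one.

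\textbf{The $\eta_{\mt,v}$ case is not ``$\eta_\letraw$ on the nose''.} Your description only matches the case $E=a$. For the remaining co-values, $\mtr E$ has the shape $\letvc{\ehole}{v'}{c'}$, so $\mtr E[\ret v]=\letvc{\ret v}{v'}{c'}$ and the step you need inside $\letvc{P}{v}{\mtr E[\ret v]}$ is $\sigmabi$ (not $\eta_\letraw$), followed by an $\alpha$-renaming. The $\eta_{\mt,n}$ case is a single $\sigmabi$ on the outer substitution, as you will find when you unfold it.

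Neither point threatens the argument; both are repaired by adding the indicated reductions. Your identification of the two substitution lemmas (term substitution commuting with $\mtr{(\cdot)}$ and $\mdtr{(\cdot)}$ according to mode, and co-variable substitution $\mtr{[E/a]T}=[\mtr E/a]\mtr T$ using that $\mtr E$ is a base context) as the real work is exactly right.
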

%-----------------------------------------

As a consequence, $\lmmtvn$ is the promised confluent calculus of
cut-elimination.

\begin{cor}$\lmmtvn$ is strongly normalizable and
confluent on typable expressions.
\end{cor}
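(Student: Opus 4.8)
The proof of the final corollary follows the same pattern as Corollary~\ref{cor:lvcsn}, but now chained through two layers of strict simulation. The plan is to derive strong normalization of $\lmmtvn$ on typable expressions from that of $\lb[\vbeta]$, and then obtain confluence via Newman's lemma.

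First I would establish strong normalization. Suppose for contradiction there were an infinite reduction sequence $T_0\red T_1\red T_2\red\cdots$ starting from a typable expression of $\lmmtvn$. We must be slightly careful because the reduction relation of $\lmmtvn$ has two sorts of judgement: reductions between terms/commands, and reductions between co-terms. For the latter, Theorem~\ref{thm:mtrgood}.2 says that each step $e_i\red e_{i+1}$ yields $\mtr{e_i}[P]\red^+\mtr{e_{i+1}}[P]$ in $\lvc$ for any fixed computation $P$ (for instance a fresh computation variable $p$, which keeps the expression typable by the last admissible rule of Fig.~\ref{fig:montranstypes}); for the former, Theorem~\ref{thm:mtrgood}.1 gives $\mtr{T_i}\red^+\mtr{T_{i+1}}$ directly. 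In either case the image under the monadic translation (with a fixed filling $p$ in the co-term case) performs strictly more than zero reduction steps in $\lvc$ at each stage, so the infinite sequence in $\lmmtvn$ would produce an infinite reduction sequence in $\lvc$ starting from a typable expression (typability of the image is exactly the content of Fig.~\ref{fig:montranstypes}). This contradicts Corollary~\ref{cor:lvcsn}. Hence $\lmmtvn$ is strongly normalizing on typable expressions.

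Next I would obtain confluence. By subject reduction for $\lmmtvn$ (which holds since the typing and reduction rules are, up to the mode annotations, those of $\lmmt$, and one checks the annotated $\beta_x$, $\sigma_x$, $\pi$, $\eta$ rules preserve types — this is routine given the typing rules in Fig.~\ref{fig:lmmtvn-typing}), the set of typable expressions is closed under reduction. It then suffices, by Newman's lemma, to prove local confluence on typable expressions. For this I would enumerate the critical pairs of the rewriting system of Fig.~\ref{fig:lmmtvn-rules}. The key observation, already made in the text, is that the dangerous overlap $\com{\mu a.c}{\mt x.c'}$ no longer produces a genuine critical pair: for $x=v$ it is only a $\pi$-redex, for $x=n$ it is only a $\sigma_n$-redex, so the two rules never both apply to the same term. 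The remaining overlaps are the trivial ones involving $\etamtx$ and $\etam$ (as in $\nlmm$ and $\vlmm$), together with overlaps between $\beta_x$ and $\sigma_x$/$\pi$ inside the command produced on the right-hand side — these close in the standard way, exactly as for the cbn and cbv fragments of $\lmmt$. So local confluence holds, and with strong normalization Newman's lemma delivers confluence.

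The main obstacle, and the only place demanding real care, is the bookkeeping in the local-confluence analysis around the $\eta$-rules and the moding side-conditions: one must check that whenever a redex is contracted, the side-conditions (``$x\notin e$'', ``if $x=v$ then $e=E$'', ``if $x=v$ then $t=V$'') are either preserved or the residual is still joinable despite the condition now failing. For instance, contracting a redex inside $e$ could in principle turn a co-value into a non-co-value, which is exactly why $\eta_{\mt,v}$ carries its proviso; one verifies that the remaining rules do not destroy co-value status of the relevant subexpressions, so no new obstructions to joining arise. This is a finite, mechanical check once the critical pairs are listed, and the absence of the cbn/cbv overlap is what makes it go through; no new idea beyond what is already in the paper is needed.
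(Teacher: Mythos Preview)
Your proposal is correct and follows essentially the same route as the paper: strong normalization is obtained from Corollary~\ref{cor:lvcsn} via the strict simulation of Theorem~\ref{thm:mtrgood}, and confluence then follows from Newman's lemma together with local confluence, whose cornerstone is the absence of overlap between $\sigma$ and $\pi$. Your treatment is more detailed than the paper's (in particular the explicit handling of co-term reductions via a fixed filling $p$, and the discussion of the $\eta$-side-conditions), but the argument is the same; one minor quibble is that what you call ``overlaps between $\beta_x$ and $\sigma_x/\pi$ inside the command produced on the right-hand side'' are not critical pairs in the technical sense---the genuine critical pairs are only those at overlapping left-hand sides, and as you note these are the trivial ones involving the $\eta$-rules.
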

\begin{proof}
Strong normalization is inherited from $\lvc$ (Corollary
 \ref{cor:lvcsn}) through strict simulation.
 Confluence follows from strong normalizability and local
confluence. The cornerstone of local confluence in $\lmmtvn$ is the
absence of overlap between $\sigma$ and $\pi$, as explained before.
\end{proof}

%--------------------------------

Through composition with the continuations-monad instantiation
$\cm{(\cdot)}:\lvc\to\lb[\vbeta]$, the monadic semantics
$\mtr{(\cdot)}:\lmmtvn\to\lvc$ is instantiated to a CPS semantics
$\cps{(\cdot)}:\lmmtvn\to\lb[\vbeta]$.

%------------------------------------------------------
\begin{thm}[CPS translation]\label{thm:cpsgood}
  If $T\red T'$ in $\lmmtvn$, then $\cps{T}\red_{\vbeta}^+\cps{T'}$ in
  simply-typed $\lb$-calculus, where $T$, $T'$ are either two terms or two commands.
\end{thm}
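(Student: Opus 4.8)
The plan is to obtain Theorem~\ref{thm:cpsgood} as an immediate consequence of the two strict simulation results already established: Theorem~\ref{thm:mtrgood} for the monadic translation $\mtr{(\cdot)}:\lmmtvn\to\lvc$, and Theorem~\ref{thm:simulation-for-refinedcm} for the continuations-monad instantiation $\cm{(\cdot)}:\lvc\to\lb[\vbeta]$. Since $\cps{(\cdot)}$ is \emph{defined} as the composite $\cm{(\cdot)}\circ\mtr{(\cdot)}$, one simply chains the two simulations. Concretely, first I would record that by Theorem~\ref{thm:mtrgood}, part~1, a step $T\red T'$ in $\lmmtvn$ (with $T,T'$ both terms or both commands) yields $\mtr T\redt\mtr{T'}$ in $\lvc$; that is, $\mtr T = S_0 \red S_1 \red \cdots \red S_k = \mtr{T'}$ with $k\ge 1$. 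Then I would apply Theorem~\ref{thm:simulation-for-refinedcm} to each of these $k$ steps, obtaining $\cm{S_{i}}\redt_\vbeta\cm{S_{i+1}}$ in $\lb[\vbeta]$ for every $i$. Concatenating these nonempty reduction sequences gives $\cm{\mtr T}\redt_\vbeta\cm{\mtr{T'}}$, i.e.\ $\cps T\redt_\vbeta\cps{T'}$, which is exactly the claim.

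\textbf{Key steps in order.} (1) Unfold the definition $\cps{(\cdot)} = \cm{(\cdot)}\circ\mtr{(\cdot)}$. (2) Invoke Theorem~\ref{thm:mtrgood}(1) on the given step $T\red T'$ to get a reduction sequence of length $\ge 1$ in $\lvc$ between $\mtr T$ and $\mtr{T'}$; here one must check that the hypothesis on $T,T'$ (two terms or two commands) matches the hypothesis of Theorem~\ref{thm:mtrgood}(1), which it does verbatim. (3) Apply Theorem~\ref{thm:simulation-for-refinedcm} pointwise along that sequence; note that this theorem is stated for arbitrary expressions $T$ of $\lvc$ (values, computations, commands alike), so it applies to each intermediate $S_i$ regardless of its syntactic class, and each single step is mapped to a nonempty $\vbeta$-reduction. (4) Observe that $\redt_\vbeta$ is transitive, so composing $k\ge 1$ nonempty sequences stays nonempty, yielding $\cps T\redt_\vbeta\cps{T'}$.

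\textbf{Where the work actually lies.} There is essentially no obstacle in the proof of this corollary-like theorem itself---it is a one-line composition argument. The only point deserving a moment's care is that one is composing simulations across \emph{different} syntactic classes: Theorem~\ref{thm:mtrgood} distinguishes the term/command case (part~1) from the co-term case (part~2, which phrases the simulation in terms of hole-filling $\mtr e[P]$), and one should make sure that, in the statement at hand, $T$ and $T'$ range over terms or commands, so that part~1 alone suffices and no hole-filling bookkeeping is needed. Similarly, it matters that Theorem~\ref{thm:simulation-for-refinedcm} is genuinely uniform over all $\lvc$-expressions (including the $\cmaux P$ auxiliary on the way), so that the intermediate terms produced by $\mtr{(\cdot)}$---which may be computations, values, or commands---are all covered. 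All the genuine difficulty was front-loaded into proving the two strict simulation theorems; the present statement just harvests them. As remarked in the text, the byproduct of interest is that $\cps{(\cdot)}$ thereby furnishes a single CPS translation that is sound for both the cbn and cbv fragments of $\lmmt$ simultaneously, since both embed into $\lmmtvn$.
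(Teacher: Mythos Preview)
Your proposal is correct and matches the paper's own proof, which simply says ``By putting together Thm.~\ref{thm:simulation-for-refinedcm} and Thm.~\ref{thm:mtrgood}.'' You have spelled out in more detail the chaining of the two strict simulations and the bookkeeping about syntactic classes, but the underlying argument is identical.
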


\begin{proof} By putting together Thm.~\ref{thm:simulation-for-refinedcm} and
Thm.~\ref{thm:mtrgood}.
\end{proof}

So we are using the methodology of Hatcliff and Danvy
\cite{HatcliffDanvy94} to synthesize a new CPS translation, as done
in \cite{ourMSCS}.
The obtained CPS translation is easily produced,
but its explicit typing behaviour and recursive structure is rather
complex (no space for details).
Given that the monadic translation is uniform for cbn and cbv, so is
the CPS translation. Given that the monadic translation and the
continuations-monad instantiation produce strict simulations, the
CPS translation embeds $\lmmtvn$ into the simply-typed
$\lb$-calculus.

%%% Local Variables:
%%% mode: latex
%%% TeX-master: "main"
%%% End:

\section{Final remarks}\label{sec:concl}

%--------------------------------------------------------
\textbf{Recovering confluence in classical logic.} Let us return to
Fig.~\ref{fig:new-picture} and ignore the monad instantiation. In
this paper two \emph{confluent} systems are proposed where the cbn
and cbv fragments of $\lmmt$ embed: $\lmmt$ with modes and $\lvc$.
As usual (recall linear and polarized logics
\cite{DanosJoinetSchellinxJSL97,ZeilbergerAPAL08}), confluence is
regained through refinement/decoration of the logical connectives of
classical logic. In the case of $\lmmt$ with modes, the
``amalgamation'' of cbn and cbv is obtained through mode distinctions and
annotations; in the case of $\lvc$, the distinction between value
and computation expressions and types is done on top of an already
refined system ($\mlm$), where classical logic is enriched with a
monad. The forgetful map from $\lmmt$ with modes to full $\lmmt$
forgets about modes with loss of confluence
\cite{CurienHerbelinICFP2000}, whereas the forgetful map from $\lvc$
to $\mlm$ blurs the distinction value/computation without loss of
confluence \cite{ourMSCS}.

The many ways out of the $\sigma/\pi$-dilemma illustrate the general
theme of the missing information in classical cut-elimination.
In the system $LK^{tq}$ of \cite{DanosJoinetSchellinxJSL97}, the
extra information that drives the cut-elimination procedure is the
``color'' of the cut formula. In $\lmmt$ the syntax of formulas is
not enriched, so the two ways out of the dilemma make use of other
means of expression (whether the term $t$ in $\com{t}{\mutil x.c}$
(resp. co-term $e$ in $\com{\mu a.c}{e}$) is a value (resp. a
co-value)). In $\lmmtvn$ the extra information is simply provided by
the mode of a variable.

Although we tried to give a self-contained presentation of
$\lmmtvn$, with a later justification through the monadic semantics,
the semantics appeared before the syntax:
first we designed $\lvc$ and then we ``pulled back''
to the syntax of $\lmmt$ an abstraction of the design of $\lvc$. The
resulting system is striking for many reasons. By amazingly simple
means, $\lmmtvn$ resolves the cbn/cbv dilemma while still
comprehending the cbn and cbv fragments.
Polarized systems achieve
the same kind of goals, but by rather more elaborate means:
co-existence of positive and negative fragments, mediated by
``shift'' operations (see \cite{ZeilbergerAPAL08}, or the long
version of \cite{Munch-MaccagnoniCSL09}, available from the author's
web page). On the other hand, the absence of type modes makes $\lmmtvn$ rather less structured than colored or polarized systems. As
a conclusion, $\lmmtvn$ proves that one does not need a very
elaborate proof-theoretical analysis in order to ``fix'' classical
logic.

We now comment on two subjects to which we made lateral
contributions.

\textbf{Calculi of values and computations.} In the literature there
are other \emph{intuitionistic} calculi of values and computations,
for instance Filinski's multi-monadic meta-language (M3L)
\cite{FilinskiTCS2007} and Levy's call-by-push-value (CBPV)
\cite{LevyHOSC2006}. These are very rich languages, whose typing
systems include products and sums; in addition, M3L lets monads be
indexed by different ``effects'' and allows ``sub-effecting'',
whereas CBPV decomposes the monad into two type operations $U$ and
$F$. Notwithstanding this, the main difference of these languages
from the intuitionistic $\lvc$ is that function spaces are
computation types, and therefore $\lambda$-abstractions are
computations. This classification has a denotational justification:
in M3L and CBPV a computation type is a type that denotes a
$\mathcal T$-algebra (for $\mathcal T$ some semantic monad); it
follows that $A\imp C$ is a computation type, for if $C$ denotes one
such algebra, so does $A\imp C$. On the other hand, our
classification of types into value types and computation types
follows the suggestion by Hatcliff and Danvy \cite{HatcliffDanvy94},
and results in a system where ``values'' (=terms that receive a
value type) are simultaneously values (=fully evaluated expressions)
in the traditional sense of operational semantics.

\textbf{Generic account of CPS translations.} The idea of factoring
CPS translations into a monadic translation and a ``generic''
instantiation to the continuations monad is due to Hatcliff and
Danvy \cite{HatcliffDanvy94}. The extension of this idea to CPS
translation of classical source systems is found in the authors'
previous work \cite{ourMSCS}, where the system $\mlm$ was
introduced. The results from \emph{op.\,cit.}~are
illustrated in Fig.~\ref{fig:MSCS}, which in fact contains two
pictures, one for each of the cbn and cbv fragments of $\lmmt$. Each
fragment required its own monadic translation and optimized
instantiation in order to achieve strict simulation by $\vbeta$ in the
$\lb$-calculus.

%-----------------------------------------------
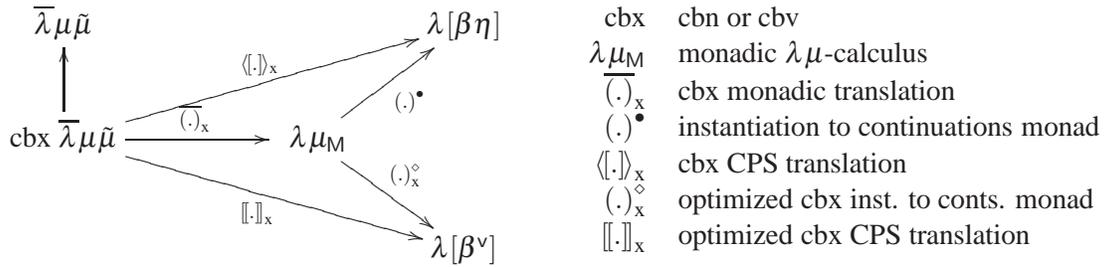
\begin{figure*}[t]\caption{Cbx pictures from \cite{ourMSCS}}\label{fig:MSCS}
\begin{center}
\begin{tabular}{ccc}
\xymatrix{%
\lmmt&&&{\lb[\beta\eta]}\\
\textrm{cbx }\lmmt\ar[u]\ar[rr]^{\;\;\;\vmo{(.)}_{\textrm x}}\ar[drrr]_{{[\![.]\!]}_{\textrm x}}\ar[urrr]^{\;\;\langle\![.]\!\rangle_{\textrm x}}
&&\;\;\mlm\ar[dr]^{{(.)}^{\diamond}_{\textrm x}}\ar[ur]_{\;\cm{(.)}}&\\
&&&{\lb[\vbeta]}\\
}%
&\qquad&
\begin{tabular}[t]{rl}
cbx&cbn or cbv\\
$\mlm$&monadic $\lm$-calculus\\
$\vmo{(.)}_{\textrm x}$&cbx monadic translation\\
$\cm{(.)}$&instantiation to continuations monad\\
${\langle\![.]\!\rangle}_{\textrm x}$&cbx CPS translation\\
${(.)}^{\diamond}_{\textrm x}$&optimized cbx inst. to conts. monad\\
${[\![.]\!]}_{\textrm x}$&optimized cbx CPS
translation
\end{tabular}
\end{tabular}
\end{center}\vspace{-0.5cm}
\end{figure*}
%-----------------------------------------------

As a by-product of the present paper, we obtain an improvement in
the generic account of CPS translations from classical source
systems. Relatively to the results in our paper \cite{ourMSCS} one
sees the following improvements: (i) A single monadic translation
treats uniformly cbn and cbv; its source grew and its target shrunk,
relatively to the monadic translations for the cbn and cbv fragments
\cite{ourMSCS}. (ii) The instantiation with continuations monad
works for both cbn and cbv without dedicated optimizations.

% =========================================================================

\paragraph{Acknowledgments.}
We thank our anonymous referees for their helpful comments. Jos\'{e}
Esp\'{\i}rito Santo and Lu\'{\i}s Pinto have been financed by the Research
Centre of Mathematics of the University of Minho with the Portuguese
Funds from the "Funda\c{c}\~{a}o para a Ci\^{e}ncia e a Tecnologia", through the
Project PEstOE/MAT/UI0013/2014.

% =========================================================================

%%% Local Variables:
%%% mode: latex
%%% TeX-master: "main"
%%% End:

\bibliographystyle{eptcs}
\bibliography{vc}

\end{document}